\def\ps@headings{%
\def\@oddhead{\mbox{}\scriptsize\rightmark \hfil \thepage}%
\def\@evenhead{\scriptsize\thepage \hfil \leftmark\mbox{}}%
\def\@oddfoot{}%
\def\@evenfoot{}}
\journal{arXiv}
\begin{document}

\begin{frontmatter}

\newtheorem{thm}{Theorem}[section]
\newtheorem{cor}[thm]{Corollary}
\newtheorem{lem}[thm]{Lemma}
\newtheorem{prop}[thm]{Proposition}

\newtheorem{rem}[thm]{Remark}

\newtheorem{ex}[thm]{Example}

\newtheorem{defi}[thm]{Definition}

%
% paper title
% can use linebreaks \\ within to get better formatting as desired
\title{A Note on the Computational Complexity of Unsmoothened Vertex Attack Tolerance}

% author names and affiliations
% use a multiple column layout for up to two different
% affiliations

\author{Gunes Ercal}

\address{Southern Illinois University Edwardsville}

\begin{abstract}
We have previously introduced vertex attack tolerance (VAT) and unsmoothened VAT (UVAT), denoted respectively as $\tau(G) =  \min_{S \subset V} \frac{|S|}{|V-S-C_{max}(V-S)|+1}$ and $\hat{\tau}(G) =  \min_{S \subset V} \frac{|S|}{|V-S-C_{max}(V-S)|}$, where  $C_{max}(V-S)$ is the largest connected component in $V-S$, as appropriate mathematical measures of resilience in the face of targeted node attacks for arbitrary degree networks.  Here we prove the hardness of approximating $\hat{\tau}$ under various plausible computational complexity hypotheses.
\end{abstract}

\end{frontmatter}

\section{Definitions and Preliminaries}
Given a connected, undirected graph $G = (V,E)$, the Vertex Attack Tolerance of $G$ is denoted by $\tau(G)$ defined as follows:\cite{CASResSF,MattaBE14,spectralVAT}
\begin{equation*}
\tau(G) = \min_{S \subset V, S \neq \emptyset} \{\frac{|S|}{|V-S-C_{max}(V-S)|+1 } \}
\end{equation*}
where $C_{max}(V-S)$ is the largest connected component in $V-S$.  As in \cite{spectralVAT}, we refer to connected, undirected graphs $G = (V,E)$ with more than one node ($|V| \ge 2$) as \emph{non-trivial}.
\begin{rem}\label{vatrange}\cite{spectralVAT}
For nontrivial $G = (V,E)$, $0 < \tau(G) \le 1$.
\end{rem}

VAT was originally introduced as $\hat{\tau}$ (UVAT for ``unsmoothened VAT''), of which $\tau$ is a smoothened variation, defined as follows\cite{CASResSF,MattaBE14}:
\begin{equation*}
\hat{\tau} = \min_{S \subset V, S \neq \emptyset} \{\frac{|S|}{|V-S-C_{max}(V-S)|}\}
\end{equation*}
where $C_{max}(V-S)$ is the largest connected component in $V-S$.  Note that for any graph $G = (V,E)$ such that $G$ is not a clique, the pair of nodes $u,v$ which are not adjacent may be disconnected by attacking all of the other $n-2$ nodes.  However, for cliques $K_n$, no such pair exists.  Therefore:
\begin{rem}
$\hat{\tau}$ is undefined for cliques $K_n$ and defined for all other graphs.  Moreover, when $G = (V,E) \neq K_n, \hat{\tau}(G) \le n-2$.  Furthermore, when $G = (V,E)$ is connected, $\hat{\tau}(G) > 0$.  Therefore, when $G = (V,E)$ is connected and not complete, $S(\hat{\tau})$ is a vertex separator.
\end{rem}
For notational convenience: For any graph $G = (V,E)$, and any real function $f$ defined on subsets of $V$, if $h = \min_{S \subset V} f$, we define $h_S(G) = f(S)$ and $S(h(G)) = argmin_{S \subset V} f(S)$.  In particular, when $h$ is a resilience measure on a graph, then $S(h)$ denotes the critical attack set.

We refer to the optimization problem corresponding to computing $\tau(G)$ and $\hat{\tau}(G)$ as simply VAT and UVAT, respectively.  It is assumed that any approximation algorithm for UVAT returns a candidate critical attack set that is a valid vertex separator when the input is not a clique (as finding some vertex separator is easy).

The reduction in this work extends the techniques for the NP-Hardness proof for the vertex integrity of co-bipartite graphs presented in \cite{integrityHard}.  Similarly, our computational hardness results for VAT and other measures involve reductions with the Balanced Complete Bipartite Subgraph problem (BCBS).  The BCBS problem is defined as:
\begin{defi}
\emph{Instance:} A balanced bipartite graph $G = (V_1, V_2, E)$ with $n = |V_1| = |V_2|$ and an integer $0 < k \le n$.
\emph{Question:} Does there exist $A \subset V_1$ and $B \subset V_2$ such that $|A| = |B| = k$ and $(A,B)$ form a $k \times k$ complete bipartite graph? 
\end{defi}
The maximization version of the problem can be referred to as MAX-BCBS.  The following three theorems regard the hardness of approximating MAX-BCBS under various plausible complexity theoretic assumptions:

\begin{thm}\cite{BCBSinapprox}\label{thm:BCBSHard1}
It is NP-hard to approximate the MAX-BCBS problem within a constant factor if it is NP-hard to approximate the maximum clique problem within a factor of $n/2^{c\sqrt{\log{n}}}$ for some small enough $c > 0$.
\end{thm}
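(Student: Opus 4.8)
The plan is to argue the contrapositive: a polynomial-time algorithm approximating MAX-BCBS within a fixed constant $\rho$ is converted into a polynomial-time algorithm that, given an arbitrary graph $G=(V,E)$ with $|V|=n$, approximates $\omega(G)$ within a factor $n/2^{c\sqrt{\log n}}$, contradicting the assumed hardness. The engine is a gap-preserving reduction from the gap version of maximum clique to MAX-BCBS, applied on top of a graph-product amplification.

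First I would fix the basic bipartization gadget. Given $G$, let $B(G)=(V_1,V_2,E')$ with $V_1,V_2$ disjoint copies of $V$ and $u\in V_1$ adjacent to $v\in V_2$ iff $u=v$ or $uv\in E$. A clique $C$ of $G$ yields a $|C|\times|C|$ complete bipartite subgraph of $B(G)$ (its copy on each side), so $\text{MAX-BCBS}(B(G))\ge\omega(G)$; and this composes with products, since for the lexicographic power $G^{[t]}$ (on $n^t$ vertices, $\omega(G^{[t]})=\omega(G)^t$) one still has $\text{MAX-BCBS}(B(G^{[t]}))\ge\omega(G)^t$. This is the easy ``completeness'' half.

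The reverse ``soundness'' half is the crux. A $k\times k$ biclique $(A,B)$ of $B(G)$ need not satisfy $A=B$ as subsets of $V$, so it does not directly give a $k$-clique: one only obtains that $A\cap B$ is a clique and that $A\setminus B$ is completely joined to $B\setminus A$, i.e.\ $G$ contains \emph{either} a large clique \emph{or} a large complete bipartite subgraph. The plan is (i) to run the reduction on the amplified instance $G^{[t]}$ so that the clique gap is raised to the $t$-th power, and (ii) to control, level by level, the factor lost when an approximate biclique is rounded back to an approximate clique, re-routing or discarding the spurious-biclique alternative; this recovers from a size-$m$ biclique in the reduction graph a clique of size $\Omega(m^{1/t})$ in $G$ up to the accumulated per-level loss.

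Finally I would optimize parameters. Polynomiality of the construction caps the product power $t$ and the auxiliary gadget sizes; tuning these against the $2^{\sqrt{\log n}}$-type slack intrinsic to the known clique PCPs makes the total multiplicative loss on the recovered clique size exactly of order $n/2^{c\sqrt{\log n}}$. Hence a constant-factor MAX-BCBS approximation would yield an $n/2^{c\sqrt{\log n}}$-factor MAX-CLIQUE approximation, contradicting the hypothesis. The step I expect to be hardest is (ii): engineering the extraction so that an approximate balanced biclique becomes an approximate clique while leaking only the stated factor, since naive bipartization can trade a genuine clique for a mere complete bipartite subgraph, and this trade is precisely what the large ratio in the hypothesis must absorb.
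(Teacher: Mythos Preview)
The paper does not prove this statement at all: Theorem~\ref{thm:BCBSHard1} is quoted as a known result of \cite{BCBSinapprox} and used only as a black box in deriving Theorem~\ref{thm:uvathard}. There is therefore no in-paper proof to compare your proposal against.

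Regarding the proposal on its own terms: the contrapositive framing, the bipartization gadget $B(G)$, and the product amplification are indeed the ingredients that appear in the cited reference, so your direction is correct. But what you have written is, by your own description, a plan and not a proof. You identify step~(ii)---turning an approximate balanced biclique in $B(G^{[t]})$ back into an approximate clique in $G$ while leaking at most a $2^{O(\sqrt{\log n})}$ factor---as the crux, and you do not carry it out; the phrase ``re-routing or discarding the spurious-biclique alternative'' is a placeholder, not an argument. That step is precisely where the entire difficulty of the cited theorem lies (a naive biclique in $B(G)$ can correspond to a large complete bipartite subgraph of $G$ with no large clique at all), and nothing in your outline explains how the loss is bounded. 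So the proposal is an accurate road map of the known proof but is not itself a proof, and in any case there is nothing in the present paper to compare it to.
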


\begin{thm}\cite{noPTASKhot}\label{thm:BCBSHard2}
Let $\epsilon > 0$ be an arbitrarily small constant.  Assume that SAT does not have a probabilistic algorithm that runs in time $2^{n^\epsilon}$ on an instance of size $n$.  Then there is no polynomial time (possibly randomized) algorithm for MAX-BCBS that achieves an approximation ratio of $N^{\epsilon'}$ on graphs of size $N$ where $\epsilon' = \frac{1}{2^{O(1/\epsilon \log{(1/\epsilon)})}}$.
\end{thm}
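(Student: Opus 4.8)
Since Theorem~\ref{thm:BCBSHard2} is the bipartite-clique inapproximability result of \cite{noPTASKhot}, the plan is to reconstruct it as a gap-preserving reduction from SAT that passes through a \emph{quasi-random} PCP. First I would invoke the PCP theorem to turn a $3$-SAT instance $\varphi$ of size $n$ into a two-prover (label-cover) constraint satisfaction instance that is either perfectly satisfiable or at most $(1-\delta)$-satisfiable for a fixed constant $\delta > 0$, arranging in addition the mild regularity and smoothness properties needed downstream (each variable appearing in the same number of constraints, the constraint graph being an expander, and so on).

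The second step is hardness amplification with a controlled blow-up. Applying parallel repetition, or a tensor-product amplification, $t$ times drives the soundness from $1-\delta$ down to roughly $(1-\delta)^{t}$ while preserving \emph{quasi-randomness}: in the unsatisfiable case no small collection of partial assignments is simultaneously consistent on an atypically dense set of the amplified constraints. The amplified instance has size about $n^{O(t)} = 2^{O(t\log n)}$, so the hypothesis that SAT admits no randomized $2^{n^{\epsilon}}$-time algorithm limits us to $t = O(n^{\epsilon}/\log n)$ rounds of amplification if the reduction together with a putative fast approximation algorithm is to stay within the forbidden time budget. Trading the achievable gap against this cap on $t$, and against the nested parameters of the quasi-random construction, is exactly what produces the unwieldy constant $\epsilon' = 2^{-O(\frac1\epsilon \log \frac1\epsilon)}$; I would defer this optimization to the end rather than carry symbolic constants throughout.

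Third, I would reduce the amplified quasi-random CSP to Balanced Complete Bipartite Subgraph in the standard way: place on each of the two sides one vertex per pair (block of variables, candidate partial assignment to that block), and join a left vertex to a right vertex exactly when the two partial assignments are mutually consistent and jointly satisfy every constraint they touch. A global satisfying assignment then yields a balanced complete bipartite subgraph of the prescribed size $k$ (completeness), whereas in the unsatisfiable case the quasi-randomness of the CSP forces every balanced biclique to have fewer than $k / N^{\epsilon'}$ vertices on a side, since a biclique larger than that would exhibit precisely the overly-consistent dense sub-instance that quasi-randomness forbids (soundness). Hence distinguishing the two regimes is as hard as SAT, which gives the claimed $N^{\epsilon'}$ inapproximability.

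The main obstacle is the soundness analysis in the third step, together with the parameter accounting: one must show not merely that an unsatisfiable CSP yields no \emph{large} balanced biclique, but that the threshold is small enough \emph{as a polynomial power of $N$} to rule out an $N^{\epsilon'}$-approximation and not only a PTAS, and this forces the quasi-random PCP to be engineered with quantitatively strong pseudorandomness --- the technically delicate core of \cite{noPTASKhot}. By contrast the completeness direction and the reduction itself are routine bookkeeping.
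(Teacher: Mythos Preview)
The paper does not prove Theorem~\ref{thm:BCBSHard2} at all: it is quoted verbatim from \cite{noPTASKhot} as a preliminary black-box result, with no accompanying argument. So there is nothing in the paper to compare your proposal against; the paper's ``proof'' is simply the citation.

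That said, your sketch is a reasonable high-level outline of Khot's actual argument in \cite{noPTASKhot}: the pipeline PCP $\to$ amplified quasi-random label cover $\to$ consistency bipartite graph is correct in spirit, and you correctly locate the crux in the quasi-random soundness analysis and the parameter trade-off that yields the doubly exponential loss $\epsilon' = 2^{-O((1/\epsilon)\log(1/\epsilon))}$. For the purposes of \emph{this} paper, however, none of that reconstruction is needed; the theorem is invoked only to feed into Lemma~\ref{lem:vatcobp}(III), and any attempt to reprove it here would be out of scope.
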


%\begin{thm}\cite{feigeComplexity}\label{thm:BCBSHard3}
%MAX-BCBS is R3SAT-Hard to approximate within a factor of $n^\delta$ where $n$ is the number of vertices in the input graph, and $0 < \delta < 1$ is some constant. For every $\epsilon > 0$, it is R3SAT-Hard to approximate MAX-BCBS within a factor of $1/2+\epsilon$.  More specifically, it is R3SAT-hard to distinguish between the cases $k > (1/4-\epsilon)n$ and $k < (1/8+\epsilon)n$.
%\end{thm}
%
\begin{thm}\cite{BCBSinapprox4}\label{thm:BCBSHard4}
MAX-BCBS is R4SAT-Hard to approximate within a factor of $n^\delta$ where $n$ is the number of vertices in the input graph, and $0 < \delta < 1$ is some constant.  More specifically, under the random 4-SAT hardness hypothesis: There exists two constants $\epsilon_1 > \epsilon_2 > 0$ such that no efficient algorithm is able to distinguish between bipartite graphs $G = (V_1,V_2,E)$ with $|V_1|=|V_2|=n$ which have a clique of size $\ge (n/16)^2(1+\epsilon_1)$ and those in which all bipartite cliques are of size $\le (n/16)^2(1+\epsilon_2)$.
\end{thm}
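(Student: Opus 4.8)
The plan is to follow Feige's average-case-to-inapproximability paradigm. The random $4$-SAT hardness hypothesis says, roughly, that no polynomial-time algorithm can refute (certify unsatisfiability of) a random $4$-CNF formula with a suitable linear number of clauses, even though such a formula is unsatisfiable with high probability. So I would construct a polynomial-time map $\phi \mapsto G_\phi$ from $4$-CNF formulas on $N$ variables with $m = \Delta N$ clauses, $\Delta$ a large constant to be fixed, to balanced bipartite graphs $G_\phi = (V_1,V_2,E)$ with $|V_1| = |V_2| = n = \Theta(N)$, such that: (completeness) if $\phi$ is satisfiable then $G_\phi$ has a complete bipartite subgraph of size $\ge (n/16)^2(1+\epsilon_1)$; and (soundness) if no assignment satisfies more than a $(1-c)$-fraction of $\phi$'s clauses --- which holds with high probability over a uniformly random $\phi$ once $\Delta$ is large enough --- then every complete bipartite subgraph of $G_\phi$ has size $\le (n/16)^2(1+\epsilon_2)$, for suitable constants $\epsilon_1 > \epsilon_2 > 0$ and $c = c(\Delta,\epsilon_1,\epsilon_2)$. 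An efficient algorithm distinguishing the two biclique regimes would then refute random $4$-CNF, contradicting the hypothesis; this gives the ``more specifically'' statement, and the factor-$n^\delta$ inapproximability follows from it by a standard gap-amplification step (a $t$-fold product of the reduction, or routing through a parallel-repeated label-cover intermediary) that multiplies the gap while keeping the vertex count polynomially bounded.

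For the construction I would index the two sides of $G_\phi$ by ``assignment tokens'': blocks of variables together with a partial assignment to that block, and, attached to each token, the clauses that block participates in. Two tokens on opposite sides are joined by an edge exactly when their partial assignments agree on shared variables and jointly satisfy every clause they touch. The design target is that a $(1-\gamma)$-satisfying assignment induces a biclique whose size is a decreasing function of $\gamma$; since every formula admits, by the probabilistic method, an assignment satisfying a $1 - 2^{-4} = 15/16$ fraction of its clauses, there is always a biclique of size roughly $(n/16)^2$, full satisfiability pushes it to the $(1+\epsilon_1)$ level, and a random formula --- whose best assignment only marginally beats a $15/16$-satisfying one --- admits nothing beyond the $(1+\epsilon_2)$ level. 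Making the block sizes, the number of clause coordinates per token, $\Delta$, and the two constants fit together so that these three thresholds are genuinely separated is the quantitative heart of the argument.

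The main obstacle is exactly this calibration together with the probabilistic soundness estimate. Concretely one must (i) verify that a uniformly random $4$-CNF of density $\Delta$ is, with high probability, not $(1-c)$-satisfiable for the target $c$ --- the standard route being a union bound over all $2^N$ assignments times a Chernoff bound on the number of clauses a fixed assignment satisfies, whose expectation is $(15/16)m$, which succeeds for $\Delta$ a large enough constant; (ii) establish the two-way translation between ``$\phi$ is $(1-\gamma)$-satisfiable'' and ``$G_\phi$ has a biclique of size $f(\gamma)$'' with only lower-order losses, so that the constant gap survives the reduction; and (iii) check that the amplification step preserves balancedness and that the parameters can be tuned to extract a fixed exponent $\delta \in (0,1)$. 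Steps (ii) and (iii) are largely bookkeeping once the gadget is right; step (i), and designing the gadget so that the $15/16$-baseline, the satisfiable case, and the random case land at three separated multiples of $(n/16)^2$, is where the real effort goes.
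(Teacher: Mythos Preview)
The paper does not prove this theorem at all: it is quoted from the cited reference \cite{BCBSinapprox4} and used purely as a black box to feed into Lemma~\ref{lem:vatcobp}(III) and thereby obtain part~(III) of Theorem~\ref{thm:uvathard}. There is consequently no ``paper's own proof'' to compare your proposal against; the present paper neither needs nor supplies one.

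Your sketch is a reasonable high-level reconstruction of the Feige refuting-random-$k$-SAT paradigm that underlies the cited result, and the $15/16$ baseline you identify is indeed the source of the $(n/16)^2$ scaling in the statement. One point that would need more care if you actually wrote this out: the passage from the constant-factor gap in the ``more specifically'' clause to the $n^\delta$ inapproximability in the first sentence is not as routine as you suggest. Graph-product amplification works cleanly for maximum clique but does not straightforwardly preserve the \emph{balanced} biclique structure, so ``a $t$-fold product of the reduction'' is not by itself enough; the original references handle this via a more delicate construction, and you should not wave it through as standard bookkeeping.
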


\section{Results}
Our main theorem is as follows:
\begin{thm}\label{thm:uvathard}
All of the following statements hold even when UVAT is restricted to co-bipartite graphs.
\begin{itemize}
\item[(I)] It is NP-Hard to approximate UVAT within a constant factor if it is NP-hard to approximate the maximum clique problem within a factor of $n/2^{c\sqrt{\log{n}}}$ for some small enough $c > 0$.
\item[(II)] Let $\epsilon, \epsilon'$ be as in Theorem \ref{thm:BCBSHard2}.  If SAT has no probabilistic algorithm that runs in time $2^{n^\epsilon}$ on instances of size $n$, then there is no polynomial time (possibly randomized) algorithm for UVAT that achieves an approximation ratio of $N^{\epsilon'}$ on graphs of size $N$
\item[(III)] UVAT is R4SAT-Hard to approximate within a factor of $n^\delta$ where $n$ is the number of vertices in the input graph, and $0 < \delta < 1$ is some constant.
\end{itemize}
\end{thm}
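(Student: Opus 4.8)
The plan is to reduce the gap version of MAX-BCBS to UVAT restricted to co-bipartite graphs, extending the complementation idea of \cite{integrityHard}. Given a balanced bipartite instance $G = (V_1,V_2,E)$ with $|V_1| = |V_2| = n$, I would take $H = \overline{G}$: $V_1$ and $V_2$ each induce a clique in $H$, and for $u \in V_1$, $v \in V_2$ the edge $uv$ is present in $H$ exactly when $uv \notin E$. Then $H$ is co-bipartite by construction, has $2n$ vertices, and is computable in polynomial time; after discarding trivial cases (assume $E \neq \emptyset$ and that $G$ is not complete bipartite, so that $H$ is connected, is not a clique, and $\hat{\tau}(H)$ is defined), the reduction stays inside the promised graph class.

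The first key step is a structural identity for $\hat{\tau}(H)$. Since $V_1$ and $V_2$ induce cliques, for any attack set $S$ the graph $H - S$ is disconnected precisely when $A := V_1 \setminus S$ and $B := V_2 \setminus S$ are both nonempty and no edge of $H$ joins $A$ to $B$; the latter condition says exactly that $(A,B)$ spans a complete bipartite subgraph of $G$, and in that case $H - S$ has exactly the two components $A$ and $B$, so that $|V - S - C_{max}(V-S)| = \min(|A|,|B|)$ and $|S| = 2n - |A| - |B|$. Every other attack set yields denominator $0$ and is irrelevant to the minimum (the standing assumption that an approximate solution is a genuine separator is what guarantees that the value it certifies is finite). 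Hence
\[
\hat{\tau}(H) \;=\; \min\Bigl\{\; \frac{2n - |A| - |B|}{\min(|A|,|B|)} \;:\; (A,B)\text{ spans a complete bipartite subgraph of }G,\ |A| \ge 1,\ |B| \ge 1 \;\Bigr\}.
\]

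The second step transfers a biclique gap through this identity. If $G$ contains $K_{s,s}$, then choosing $|A| = |B| = s$ gives $\hat{\tau}(H) \le \frac{2n - 2s}{s} = \frac{2n}{s} - 2$. Conversely, if $G$ contains no $K_{t,t}$ --- equivalently, every complete bipartite subgraph $(A,B)$ of $G$ has $\min(|A|,|B|) \le t-1$ --- then for every admissible pair $|A| + |B| = \min(|A|,|B|) + \max(|A|,|B|) \le (t-1) + n$, so $\frac{2n - |A| - |B|}{\min(|A|,|B|)} \ge \frac{n - t + 1}{t - 1} \ge \frac{n}{t} - 1$, whence $\hat{\tau}(H) \ge \frac{n}{t} - 1$. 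Thus a MAX-BCBS promise instance with a $K_{s,s}$ in the YES case and no $K_{s/r,\,s/r}$ in the NO case (where $r \ge 1$ is the hardness factor) maps to a UVAT instance with $\hat{\tau}(H) \le \frac{2n}{s} - 2$ versus $\hat{\tau}(H) > \frac{rn}{s} - 1$; writing $\beta = n/s \ge 1$, the ratio of these thresholds is $\frac{\beta r - 1}{2(\beta - 1)}$, which a one-line check shows is at least $\frac{r-1}{2}$. Plugging in Theorems \ref{thm:BCBSHard1}, \ref{thm:BCBSHard2}, \ref{thm:BCBSHard4}, whose YES/NO biclique gaps are respectively a (sufficiently large) constant, $N^{\epsilon'}$, and $n^{\delta}$, yields UVAT instances with a constant, an $\Omega(N^{\epsilon'})$, and an $\Omega(n^{\delta})$ gap in $\hat{\tau}$; since a $\rho$-approximation for UVAT with $\rho$ below the produced gap would decide the corresponding BCBS promise problem, statements (I), (II), (III) follow.

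I expect the real work to lie not in the construction but in the bookkeeping of the final step: one must reconcile the fact that the cited BCBS results are phrased in terms of the edge count $|A|\cdot|B|$ of possibly unbalanced bicliques, whereas $\hat{\tau}(H)$ only sees $\min(|A|,|B|)$, and, where a cited gap is merely a small constant, precede the reduction by the standard product-based gap amplification for biclique problems so that the factor $\frac{r-1}{2}$ is genuinely greater than $1$; one also has to track the change in instance size ($N = 2n$) so that the exponents $\epsilon'$ and $\delta$ survive up to adjusting constants. The structural identity and the two threshold bounds are routine once the two-component structure of $H - S$ is observed, so these size/gap reconciliations are the only delicate points.
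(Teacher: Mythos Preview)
Your proposal is correct and follows essentially the same approach as the paper: pass to the co-bipartite complement, identify vertex separators of $\overline{G}$ with bicliques of $G$ to obtain the identity $\hat{\tau}(\overline{G})=\min_{(A,B)}\frac{2n-|A|-|B|}{\min(|A|,|B|)}$, sandwich $\hat{\tau}(\overline{G})$ between $\frac{n}{k}-1$ and $2(\frac{n}{k}-1)$, and conclude by invoking the cited BCBS hardness results. The only cosmetic difference is that the paper packages the last step as an approximation-preserving reduction (an $\alpha$-approximation for UVAT yields, via $k\approx n/(q+1)$, a $2\alpha$-approximation for MAX-BCBS), whereas you phrase it as a gap-preserving reduction; the bookkeeping issues you flag at the end are exactly the ones the paper elides with ``follows directly.''
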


The theorem follows directly from part (III) of the following Lemma and Theorems \ref{thm:BCBSHard1}, \ref{thm:BCBSHard2}, \ref{thm:BCBSHard4}.
\begin{lem}\label{lem:vatcobp}
Let $G = (V_1, V_2, E)$ with $|V_1|=|V_2|=n$ be a bipartite graph with $E \neq \emptyset$, and let $\overline{G} = (V_1, V_2, \overline{E})$ be the co-bipartite complement of $G$.  Let $BK(G) = \{ (A,B) | A \times B \text{ is a bipartite clique in } G \text{ with } |A| \le |B| \}$.  Moreover, let $BBK(G) = \{ (A,B) \in V_1 \times V_2 | A \times B \text{ is a bipartite clique of } G \text{ with } |A| = |B| \}$, and let $(\hat{A},\hat{B}) = argmax_{(A,B) \in BBK(G)} |A|$ be the maximum balanced bipartite clique of $G$ with corresponding size $k = |\hat{A}|$.  Then, the following hold:
\begin{itemize}
\item[(I)] $\hat{\tau}(\overline{G}) = \min_{(A,B) \in BK(G)} \frac{2n-|A|-|B|}{|A|} =  \min_{(A,B) \in BK(G)} \frac{2n-|B|}{|A|} - 1$
\item[(II)] $\frac{n}{k} - 1 \le \hat{\tau}(\overline{G}) \le 2(\frac{n}{k} - 1)$
\item[(III)] If UVAT can be approximated to factor $\alpha$ in polynomial time, then MAX-BCBS can be approximated to factor $2\alpha$ in polynomial time, even when restricted to co-bipartite graphs.
\end{itemize}
\end{lem}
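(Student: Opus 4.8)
The plan is to prove (I) first as a structural description of UVAT on co-bipartite graphs, then read off (II) by substituting the extremal biclique and bounding, and finally obtain (III) as a gap-preserving reduction built on (I) and (II). Throughout I may assume $G \neq K_{n,n}$ (equivalently $\overline{G}$ is connected and not a clique, so $\hat{\tau}(\overline{G})$ is well defined, positive and finite) and $E \neq \emptyset$: when $G = K_{n,n}$ the MAX-BCBS optimum is $n$ and when $E = \emptyset$ it is $0$, both detectable in polynomial time, so these degenerate inputs do not affect part (III). For (I), fix a nonempty $S \subsetneq V_1 \cup V_2$ and put $A = V_1 \setminus S$, $B = V_2 \setminus S$. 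In $\overline{G} - S$ the set $A$ spans a clique and $B$ spans a clique, and an $A$--$B$ edge is present exactly when the corresponding pair is a non-edge of $G$. Hence $\overline{G} - S$ is either connected --- which happens whenever $A$ or $B$ is empty, or whenever some $A$--$B$ pair is a non-edge of $G$ --- giving $|V-S-C_{max}(V-S)| = 0$ and ratio $+\infty$; or it has exactly the two components $A$ and $B$, which happens precisely when both are nonempty and $A \times B$ is a complete bipartite subgraph of $G$, giving $|V-S-C_{max}(V-S)| = \min(|A|,|B|)$ and $|S| = 2n - |A| - |B|$. Since a finite value is attained (any $G$-edge $uv$ gives $S = V \setminus \{u,v\}$ with ratio $2n-2$), the minimum ranges only over the second case, i.e.\ over bicliques of $G$ with both sides nonempty; writing each such pair smaller-side-first gives $\hat{\tau}(\overline{G}) = \min_{(A,B)\in BK(G)} \frac{2n-|A|-|B|}{|A|}$, and the second form is just the identity $\frac{2n-|A|-|B|}{|A|} = \frac{2n-|B|}{|A|} - 1$.

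For (II), the upper bound is immediate from (I) by taking $(A,B) = (\hat{A},\hat{B}) \in BK(G)$, which gives ratio $\frac{2n-2k}{k} = 2(\tfrac{n}{k}-1)$. For the lower bound, let $(A,B)\in BK(G)$ be arbitrary; restricting $B$ to a subset of size $|A|$ yields a balanced biclique, so $|A| \le k$, and combined with the trivial $|B| \le n$ this gives $k(2n-|B|) \ge kn \ge n|A|$, which rearranges to $\frac{2n-|A|-|B|}{|A|} \ge \frac{n}{k}-1$.

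For (III), given a MAX-BCBS instance $G$ reduced to the nontrivial case above, I construct $\overline{G}$ in polynomial time and run the purported factor-$\alpha$ UVAT approximation on it; by the standing assumption this returns a valid separator $S$. Since $\overline{G}$ is co-bipartite, the dichotomy from (I) forces $A := V_1 \setminus S$ and $B := V_2 \setminus S$ to be nonempty with $A \times B$ a biclique of $G$, so I output a balanced biclique of $G$ of size $\ell := \min(|A|,|B|) = |V-S-C_{max}(\overline{G}-S)|$. Applying $|B| \le n$ (or $|A| \le n$, whichever is the larger side) to $\hat{\tau}_S(\overline{G}) = \frac{2n-|A|-|B|}{\ell}$ gives $\frac{n}{\ell} - 1 \le \hat{\tau}_S(\overline{G})$, while the approximation guarantee together with (II) gives $\hat{\tau}_S(\overline{G}) \le \alpha\,\hat{\tau}(\overline{G}) \le 2\alpha(\tfrac{n}{k}-1)$; chaining these and using $\alpha \ge 1$ yields $\frac{n}{\ell} \le 2\alpha\frac{n}{k}$, i.e.\ $\ell \ge k/(2\alpha)$, so $\ell$ approximates the MAX-BCBS optimum $k$ within factor $2\alpha$, even on co-bipartite inputs (Theorem \ref{thm:uvathard} then follows by composing with Theorems \ref{thm:BCBSHard1}, \ref{thm:BCBSHard2}, \ref{thm:BCBSHard4}).

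The main obstacle is the structural claim underlying (I): one must verify carefully that in a co-bipartite graph the \emph{only} way a vertex-deleted subgraph disconnects is to split along the bipartition into $V_1 \setminus S$ and $V_2 \setminus S$ with those two surviving cliques mutually non-adjacent in $\overline{G}$ --- equivalently, that every valid attack set is the complement of a biclique of $G$ and conversely --- and that the value $+\infty$ coming from the connected case can be discarded because a finite value is always attained. Once that correspondence is pinned down, (II) and (III) are essentially bookkeeping, apart from the mild care needed for the degenerate inputs $E = \emptyset$ and $G = K_{n,n}$, where $\hat{\tau}(\overline{G})$ is undefined or not captured by the formula in (I) but MAX-BCBS is trivial.
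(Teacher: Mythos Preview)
Your proof is correct and, for parts (I) and (II), follows essentially the paper's route: the separator--biclique correspondence in the co-bipartite graph for (I), and the bounds $|A|\le k$, $|B|\le n$ together with $BBK(G)\subset BK(G)$ for (II). You are in fact more careful than the paper in explicitly handling the connected case of $\overline{G}-S$ (ratio $+\infty$) and the degenerate inputs $E=\emptyset$ and $G=K_{n,n}$.

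For part (III) there is a small but genuine difference. The paper works purely with the returned \emph{value} $q\in[\hat{\tau}(\overline{G}),\,\alpha\hat{\tau}(\overline{G})]$: plugging $q$ into the two inequalities of (II) gives $\frac{n}{q+1}\le k\le \frac{n}{1+q/(2\alpha)}$, and an elementary check shows the ratio of the endpoints is at most $2\alpha$, so $\frac{n}{q+1}$ is declared the approximate optimum. You instead exploit the standing assumption that the UVAT algorithm returns a separator $S$, read off an actual balanced biclique of size $\ell=\min(|V_1\setminus S|,|V_2\setminus S|)$, and chain $\frac{n}{\ell}-1\le \hat{\tau}_S(\overline{G})\le \alpha\hat{\tau}(\overline{G})\le 2\alpha(\frac{n}{k}-1)$ to get $\ell\ge k/(2\alpha)$. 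Your version is more constructive (it outputs a feasible solution, not just an estimate of $k$) and the arithmetic is a touch shorter; the paper's version has the advantage of needing only the approximate value, not the witnessing attack set.
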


\begin{proof}[Proof of Lemma \ref{lem:vatcobp}]
Let $S = S(\tau(\overline{G}))$, $U = S(\hat{\tau}(\overline{G}))$,  $R = S(I(\overline{G}))$, and  $C = S(T(\overline{G}))$ be the critical attack sets corresponding to $\tau$, $\hat{\tau}$, $I$, and $T$ for $\overline{G}$, respectively.  Furthermore, let $S_i = V_i \cap S$, $U_i = V_i \cap U$, $R = V_i \cap R$, and $C_i = V_i \cap C$.  For $X \in \{ S, U, R, C \}$, let $A_X = \min \{V_1 - X_1, V_2 - X_2\}$ and  $B_X = \max \{V_1 - X_1, V_2 - X_2\}$.

Note that $\overline{G}$ is not a clique as $E \neq \emptyset$.  Moreover, because $V_1$ and $V_2$ must both be cliques in $\overline{G}$, $A_X$ and $B_X$ must each be cliques in $\overline{G}$, for any $X \in \{ S, U, R, C \}$.  Namely, the removal of $X$ results in exactly two cliques $A_X$ and $B_X$ in $\overline{G}$.  Clearly, there can be no edge between $A_X$ and $B_X$ in $\overline{G}$ as such an edge would have remained upon the removal of $X$.  Therefore, $(A_X,B_X)$ forms a bipartite clique in $G$.  Part (I) of the lemma now follows from the definitions of $\hat{\tau}$ and the fact that $|A_X| \le |B_X|$.

Now note that for any $(A,B) \in BK(G)$, any subset $B_A \subset B$ such that $|B_A| = |A|$ forms a balanced bipartite clique with $A$.  Also clearly, $BBK(G) \subset BK(G)$.  Therefore, by (I) and fact that $|A_X| \le |B_X| \le n$, (II) follows as well.

For part (III): Let $M$ be an algorithm that gives a constant factor approximation for UVAT with approximation factor $\alpha > 1$.  Let $q$ such that $\hat{\tau}(\overline{G}) \le q \le \alpha\hat{\tau}(\overline{G})$ be the approximation to $\hat{\tau}$ computed via $M$ on the input.  Simplifying and rearranging Lemma \ref{lem:vatcobp} part (II.b):
\begin{equation}
\frac{n}{q+1} \le k \le \frac{n}{1+q/(2\alpha)}
\end{equation}
Similarly, let $r = (\frac{n}{q+1})/(\frac{n}{1+q/(2\alpha)})$ denote the ratio between the right hand side and left hand side of the inequality, so:
\begin{equation}
r = \frac{q+1}{1+q/(2\alpha)}
\end{equation}
If $r > 2\alpha$ then $1 > 2\alpha$ resulting in a contradiction.  Therefore, 
\begin{equation}
\frac{n}{q+1} \le k \le 2\alpha\frac{n}{q+1}
\end{equation}
And, $\frac{n}{q+1}$ is thus a $\frac{1}{2\alpha}$ approximation for the MAX-BCBS problem with corresponding approximation ratio $2\alpha$.
\end{proof}

\bibliographystyle{plain}
\bibliography{VAT}

\begin{thebibliography}{1}

\bibitem{integrityHard}
P{\aa}l~Gr{\o}n{\aa}s Drange, Markus~Sortland Dregi, and Pim van’t Hof.
\newblock On the computational complexity of vertex integrity and component
  order connectivity.
\newblock In {\em Algorithms and Computation}, pages 285--297. Springer
  International Publishing, 2014.

\bibitem{spectralVAT}
Gunes Ercal.
\newblock On vertex attack tolerance of regular graphs.
\newblock {\em CoRR}, abs/1409.2172, 2014.

\bibitem{CASResSF}
Gunes Ercal and John Matta.
\newblock Resilience notions for scale-free networks.
\newblock In {\em Complex Adaptive Systems}, pages 510--515, 2013.

\bibitem{BCBSinapprox}
Uriel Feige and Shimon Kogan.
\newblock Hardness of approximation of the balanced complete bipartite subgraph
  problem.
\newblock 2004.

\bibitem{BCBSinapprox4}
Andreas Goerdt and Andr{\'{e}} Lanka.
\newblock An approximation hardness result for bipartite clique.
\newblock (048), 2004.

\bibitem{noPTASKhot}
Subhash Khot.
\newblock Ruling out ptas for graph min-bisection, dense k-subgraph, and
  bipartite clique.
\newblock {\em SIAM J. Comput}, 36(4):1025--1071, 2006.

\bibitem{MattaBE14}
John Matta, Jeffrey Borwey, and Gunes Ercal.
\newblock Comparative resilience notions and vertex attack tolerance of
  scale-free networks.
\newblock {\em CoRR}, abs/1404.0103, 2014.

\end{thebibliography}

% that's all folks
\end{document}